\newtheorem{theorem}{Theorem}
\newtheorem{lemma}[theorem]{Lemma}
\newcommand{\prob}[1]{\mathbf{Pr}[#1]}
\title{Leader Election in Multi-Hop Radio Networks\footnote{Research partially supported by the Centre for Discrete Mathematics and its Applications (DIMAP).}}
\begin{document}
	

	
	\author{\textbf{Artur Czumaj} \hspace{4mm} \textbf{Peter Davies} \\[0.10in]
		Department of Computer Science \\
		Centre for Discrete Mathematics and its Applications 
		\\
		University of Warwick}
	
	\title{\textbf{Leader Election in Multi-Hop Radio Networks}
		\thanks{Research partially supported by the Centre for Discrete Mathematics and its Applications (DIMAP).}
		\thanks{Contact information: \{A.Czumaj, P.Davies.4\}@warwick.ac.uk. Phone: +44 24 7657 3796.}
	}
	
	\maketitle

\begin{abstract}
In this paper we present a framework for leader election in multi-hop radio networks which yield randomized leader election algorithms taking $O(\text{broadcasting time})$ in expectation, and another which yields algorithms taking fixed $O(\sqrt{\log n})$-times broadcasting time. Both succeed with high probability.

We show how to implement these frameworks in radio networks without collision detection, and in networks with collision detection (in fact in the strictly weaker beep model). In doing so, we obtain the first optimal expected-time leader election algorithms in both settings, and also improve the worst-case running time in directed networks without collision detection by an $O(\sqrt {\log n})$ factor.
\end{abstract}

\section{Introduction}
\emph{Leader election} is the problem of ensuring that all nodes agree on a single node to be designated leader. Specifically, at the conclusion of a leader election algorithm, all nodes should output the same ID, and precisely one node should identify this ID as its own. Since we are concerned with randomized algorithms, we can assume that nodes generate their own $\Omega(\log n)$-bit IDs uniformly at random at the beginning of the algorithm, and achieve uniqueness with high probability\footnote{Throughout this paper, we say that an event $\mathcal{E}$ holds \emph{with high probability (w.h.p.)} if $\prob{\mathcal{E}} \ge 1-n^{-c}$ for any constant $c>0$.}. Leader election is a fundamental primitive in distributed computations and, as the most fundamental means of breaking symmetry within radio networks, it is used as a preliminary step in many more complex communication tasks. For example, many fast multi-message communication protocols require construction of a breadth-first search tree (or some similar variant), which in turn requires a single node to act as source (for more examples, cf. \cite{-CKP12,-FSW14,-GH13}, and the references therein).

In this paper we present a framework which yields randomized leader election algorithms taking fixed $O(\sqrt{\log n})$-times broadcasting time, and another which yields algorithms taking $O(\text{broadcasting time})$ in expectation. Both succeed with high probability. We then implement these frameworks in radio networks with and without collision detection.

\subsection{Communication Models}
We consider three related distributed communication models (cf. e.g. \cite{-GH13,-BGI91,-CK10}): \emph{radio networks without collision detection}, \emph{radio networks with collision detection}, and the \emph{beep model}. All of these models have the same underlying network representation: a communications network is modeled by a (directed or undirected) graph $\mathcal{N} = (V,E)$, where the set of nodes corresponds to the set of transmitter-receiver stations. A directed edge $(v,u) \in E$ means that the node $v$ can send a message to the node $u$, whereas an undirected edge $\{v,u\} \in E$ means that the nodes $v$ and $u$ can exchange messages in both directions. To make the leader election problem feasible, we assume that every pair of nodes in $\mathcal{N}$ is strongly connected.

We consider ad-hoc network models, which means that we assume that a node does not have any  prior knowledge about the topology of the network, its in-degree and out-degree, or the set of its neighbors. We assume that the only knowledge of each node is the size of the network $n$, and the \emph{eccentricity} of the network $D$, which is the maximum distance between any pair of nodes in $\mathcal{N}$. The assumption of knowledge of $n$ and $D$ is integral to the results in this paper, and is common in the study of multi-hop radio networks (see, e.g., \cite{-Pel07} and the references therein).

We assume that all nodes have access to a global clock and work \emph{synchronously} in discrete time steps. When we refer to the ``running time'' of an  algorithm, we mean the number of time steps which elapse before completion (i.e., we are not concerned with the number of calculations nodes perform within time steps).

All of our algorithms will be designed to succeed with high probability. We will study two different measures of running time behavior: some of our algorithms will have running time which is a random variable, and we will analyze the expectation of this random variable. We will call these variable-time algorithms. Others will have fixed running time bound, which they adhere to even in the worst case. This latter case is sometimes referred to in the literature (e.g. in \cite{-GH13}) as high-probability running time; since we are interested in success with high probability, these concepts are equivalent. This is because an algorithm with a variable running time which is less than $t$ with high probability can be curtailed after $t$ steps while still ensuring success with high probability.

The difference between the three models is in the communication rules for nodes, as we will discuss next in Sections \ref{subsubsec:radio-networks-without-collision}--\ref{subsubsec:beeps}.

\begin{table}[t]\small
	\begin{minipage}{\textwidth}
	\renewcommand{\arraystretch}{1.4}
	\centerline{
		\begin{tabular}{|c|l|}
			\hline
			&
			\multicolumn{1}{|c|}{Radio networks without collision detection (directed networks)}
			\\\hline
			Expected complexity &
			$\Omega(D \log\frac{n}{D}+\log^2n)$ \cite{-GH13,-ABLP91,-KM98}
			\\
			&
			$O((D \log\frac{n}{D}+\log^2n) \log\log n)$ \cite{-BGI91,-CR06J,-KP03b,-W86}
			\\
			&
			$O(D \log\frac{n}{D}+\log^2n)$ \textbf{[Theorem \ref{th:var}]}
			\\
			\hline
			Worst-case &
			$\Omega(D \log\frac{n}{D}+\log^2n)$ \cite{-GH13,-ABLP91,-KM98}
			\\
			complexity &
			$O((D \log\frac{n}{D}+\log^2n) \log n)$ \cite{-BGI91,-CR06J,-KP03b}
			\\
			&
			$O((D \log\frac{n}{D}+\log^3n) \min\{\log\log n, \log\frac{n}{D}\})$\footnote{Undirected networks only}\cite{-GH13}
			\\
			&
			$O((D \log\frac{n}{D}+\log^2n)\sqrt{\log n})$ \textbf{[Theorem \ref{th:fix}]}\\
			\hline
		\end{tabular}
	} 
	\caption{\small Summary of algorithms in radio networks \emph{\textbf{without} collision detection}.}
	\label{table:NOCD}
\end{minipage}
\end{table}

\subsubsection{Radio Networks Without Collision Detection}
\label{subsubsec:radio-networks-without-collision}
The classical model of \emph{ad-hoc multi-hop radio networks without collision detection} is the most well-studied of the models (cf. \cite{-BGI91}). Its defining feature is that in each time step a node can either \emph{transmit} an $O(\log n)$-bit message to all of its out-neighbors at once or can remain silent and \emph{listen} to the messages from its in-neighbors. If a node $v$ listens in a given round and precisely one of its in-neighbors transmits, then $v$ receives the message. In all other cases $v$ receives nothing; in particular, the lack of collision detection means that $v$ is unable to distinguish between zero of its in-neighbors transmitting and more than one.

\subsubsection{Radio Networks With Collision Detection}
\label{subsubsec:radio-networks-with-collision}
The model \emph{with collision detection} (cf. \cite{-GH13}) differs in that a listening node can distinguish between hearing zero transmissions and hearing more than one. However, in the latter case it still does not receive any information about the transmitted messages.

\begin{table}[t]\small
	\begin{minipage}{\textwidth}
	\renewcommand{\arraystretch}{1.4}
	\centerline{
		\begin{tabular}{|l|l|}
			\hline
			&
			\multicolumn{1}{|c|}{Radio networks with collision detection / beep model}
			\\
			&(undirected networks only)\\
			\hline
			Expected complexity &
			$\Omega(D + \log n)$ \cite{-ABLP91,GLS12,-NO02}
			\\
			&
			$O((D \log\frac{n}{D}+\log^2n) \log\log n)$ \cite{-BGI91,-CR06J,-KP03b,-W86}
			\\
			&
			$O((D + \log n\log\log n) \min\{\log\log n, \log\frac{n}{D}\})$\footnote{Achieves same worst-case complexity} \cite{-GH13}
			\\
			&
			$O(D + \log n)$ \textbf{[Theorem \ref{th:cd}]}
			\\
			\hline
		\end{tabular}
	} 
	\caption{\small Summary of algorithms in radio networks \emph{\textbf{with} collision detection}}
	\label{table:CD}
\end{minipage}
\end{table}

\subsubsection{Beep Model}
\label{subsubsec:beeps}
The \emph{beep model} (cf. \cite{-CK10}) is strictly weaker than the model of radio networks with collision detection. Instead of $O(\log n)$-bit messages, transmitting nodes emit only a `beep'. A listening node hears a beep if at least one of its neighbors beeped, and hears silence otherwise. In particular, it cannot distinguish between one neighbor beeping and multiple neighbors beeping.

\subsection{Previous Work}
The study of leader election in radio networks started in the 1970s with the \emph{single-hop network model}, in which all nodes are directly reachable by all others (in a single hop).

While the complexity of leader election in single-hop networks is now well understood (cf. \cite{-KM98,-W86,GLS12,-NO02,-CMS03,-GW85,-KP13}), the complexity of the problem in the more general model of \emph{multi-hop networks} has been less developed.

In the seminal work initiating the study of the complexity of communication protocols in multi-hop radio networks, Bar-Yehuda et al.\ \cite{-BGI91} developed a general randomized framework of simulating single-hop networks with collision detection by multi-hop networks without collision detection. The framework yields leader election algorithms for multi-hop networks (in directed and undirected networks) running in $O(T_{BC} \cdot \log\log n)$ expected time and $O(T_{BC} \cdot \log n)$ time w.h.p., where $T_{BC}$ is the time required to broadcast a message from a single source to the entire network. The same authors also gave a randomized broadcasting algorithm running in $O(D \log n+\log^2 n)$ time w.h.p., thereby yielding a leader election algorithm taking $O((D \log  n + \log^2 n) \log\log n)$ expected time and $O(D \log^2 n + \log^3 n)$ time w.h.p.

The next improvement came with faster algorithms for broadcast due to Czumaj and Rytter \cite{-CR06J}, and independently Kowalski and Pelc \cite{-KP03b}, which require only $O(D \log \frac{n}{D} + \log^2 n)$ time w.h.p. (see also \cite{-CGGPR00,-CGR00}). Combining these algorithms with the simulation framework of Bar-Yehuda et al., one obtains leader election algorithms (even in the model without collision detection, and in both directed and undirected graphs) running in $O((D \log \frac{n}{D} + \log^2 n)\log \log n)$ expected time and $O((D \log \frac{n}{D} + \log^2 n)\log n)$ time w.h.p.

Very recently, Ghaffari and Haeupler \cite{-GH13} took a new approach, which yielded faster leader election algorithms in \emph{undirected networks}. The main idea of this work is to randomly select a small (logarithmic) number of candidates for the leader and then repeatedly run ``debates'' to reduce the number of candidates to one. Standard random sampling technique allows one to choose in constant time a random set of $\Theta(\log n)$ candidates, with high probability. Then, by running a constant number of broadcasting computations and neighborhood exploration algorithms (this phase is called a ``debate'' in \cite{-GH13} and it relies heavily on the assumption that the network is undirected), one can reduce the number of candidates by a constant factor. Using this approach, Ghaffari and Haeupler gave a leader election algorithm (in undirected networks) that in $O((D \log \frac{n}{D} + \log^3 n) \cdot \min\{\log \log n, \log \frac{n}{D}\})$ rounds elects a single leader w.h.p. For the model with collision detection (and, in fact, the weaker \emph{beep model} also employed in this paper), the same work \cite{-GH13} used a similar approach to elect a leader in $O((D + \log n \log\log n) \cdot \min\{\log \log n, \log \frac{n}{D}\})$ rounds w.h.p. These algorithms are nearly optimal, and are the fastest currently known for undirected networks and for worst-case running time.

\section{Leader election algorithms}

We first give frameworks for leader election in radio network-like scenarios which are independent of the specific communication rules of the models involved.

\subsection{\textsc{Verify} and \textsc{Compete}}
Our leader election frameworks will be based upon the use of two sub-procedures, which we will call $\textsc{Verify}$ and $\textsc{Compete}$. In broad terms, these procedures are both means of utilizing a global broadcast to collect some information about the current state of a leader election attempt.

\paragraph{\textsc{Verify}}
The purpose of \textsc{Verify} is to determine whether the input set (which in our application will be a set of candidate leaders) is of size $0$, $1$, or greater than $1$. Further, in the case that the set is of size exactly $1$ (which is what we hope for), all nodes receive the ID of its sole member.

Formally, $\textsc{Verify}(C)$ takes as input a subset of nodes $C$ and outputs a pair $(m(v),b)\in [L] \times \{0,1,2\}$ to each node $v$, satisfying the following conditions:

\begin{itemize}
\item $b = \begin{cases}0 & \text{ if $|C| = 0$;}\\ 1 & \text{ if $|C| = 1$;} \\ 2 & \text{ otherwise;}\end{cases}$
\item if $C$ consists of only a single element $c$, then $m(v) = ID(c)$ for all nodes~$v$.
\end{itemize}

Here $b$ carries the information about the size of $C$, and if $|C| = 1$ as required, $m(v)$ gives the candidate ID. Otherwise, we make no restrictions on $m(v)$.

\paragraph{\textsc{Compete}}
$\textsc{Compete}$ has stronger requirements than $\textsc{Verify}$. Rather than just returning information about the size of the input set, it instead returns an output set which is of size at least $1$, and strictly smaller than the input set (as long as this was of size greater than $1$). In our application, this will allow us to thin out a set of candidate leaders by (at least) one per iteration, until we reach a single leader.

Formally, $\textsc{Compete}(C)$ takes as input a subset of nodes $C$ and outputs a pair $(m(v),C') \in [L]\times 2^V$ to each node $v$, satisfying the following conditions:
\begin{itemize}
\item $1 \le |C'|<\max \{|C|,2\}$;
\item if $C$ consists of only a single element $c$, then $m(v) = ID(c)$ for all nodes~$v$.
\end{itemize}

In both of these procedures, sets given as input or output are implicit; that is, that each node receives, or outputs, only the information of whether it is itself a member of the set (rather than full knowledge of the set). Furthermore, we will be considering randomized implementations of these procedures, and so will ensure that our implementations meet the specified requirements with high probability.

We will show how to implement these sub-procedures in our network models later; first we describe how to use them to build general leader election frameworks.

\subsection{Leader election frameworks}
In this section we will show how the \textsc{Verify} and \textsc{Compete} procedures can be combined into frameworks for leader election. While in this work we are focused on radio networks, in general these frameworks could extend to other distributed computing models.

\subsubsection{Variable time}
We first give the framework for a leader election algorithm whose running time is a random variable, which is $O($broadcasting time$)$ in expectation.

The idea is simple: we repeatedly randomly choose a set of candidate leaders, and terminate when the set we chose is of size $1$.

\begin{algorithm}[h]
\caption{Leader Election, variable time}
\label{alg:LE}
\small
\begin{algorithmic}
\Loop
	\State each $v\in V$ chooses to be in the set $C$ of \textbf{candidates} with probability $\Theta(\frac1n)$
	\State $(m(V),b) \gets \textsc{Verify}(C)$
    \State \textbf{if} $b=1$ \textbf{then} output $m(V)$, terminate
\EndLoop
\end{algorithmic}
\end{algorithm}

\begin{theorem}\label{th:vartm}
If $\textsc{Verify}(C)$ is implemented in $t$ time to succeed with high probability, then Algorithm \ref{alg:LE} performs leader election in $O(t)$ expected time with high probability.
\end{theorem}

\begin{proof}
Assume for the sake of the analysis that if the algorithm has run unsuccessfully for $\frac n2$ iterations, it terminates. With probability at least $1-\frac{n^{-1}}{2}$, $\textsc{Verify}$ returns the correct result for these $\frac n2$ iterations. Notice that in any particular iteration, the probability that $|C| = 1$ is bounded above by a constant, and denote this constant $c$. With probability at least $1-c^{\frac n2}$ one of the iterations will have had $|C| = 1$, so the algorithm will correctly perform leader election with probability at least $(1-\frac{n^{-1}}{2})(1-c^{\frac n2}) \ge 1-n^{-1}$. Since $c$ is constant, expected number of iterations until $|C| = 1$ is also a constant $k$. Expected number of iterations until termination is therefore at most $k(1-\frac{n^{-1}}{2}) + \frac n2\frac{n^{-1}}{2} < k+1$, i.e., a constant, and since the running time of each iteration is dominated by that of \textsc{Verify}, expected running time is $O(t)$.
\end{proof}

\subsubsection{Fixed time}
Next we give another framework for leader election whose running time is fixed, and therefore has better worst-case performance. This framework is slightly more complex, and consists of two main phases. In the first, each node chooses to be a candidate with probability $\Theta(\frac{ \log n}{n})$, which ensures that $\Theta(\log n)$ candidates are chosen with high probability. Then we repeatedly have candidates drop out with probability $\frac 12$, and use the \textsc{Verify} procedure to check that we didn't remove all candidates. Doing this $\Theta(x)$ times (where $x$ is some parameter to be fixed presently) ensures that only $\Theta(\frac{\log n}{x})$ candidates remain, w.h.p. Then, in the second phase, we repeatedly use \textsc{Compete} to remove candidates one at a time, until only a single one remains.

\begin{algorithm}[h]
\caption{Leader Election, fixed time}
\label{alg:LE2}
\small
\begin{algorithmic}
\State each $v\in V$ chooses to be in the set $C$ of \textbf{candidates} with probability $\Theta(\frac{ \log n}{n})$
\Loop { $\Theta(x)$ times}
	\State each $v \in C$ chooses to be in $C'$ with probability $\frac 12$
	\State $(m(v),b) \gets \textsc{Verify}(C')$
	\State \textbf{if} $b\neq 0$ then $C \gets C'$
\EndLoop
\Loop { $\Theta(\frac{\log n}{x})$ times}
	\State $(m(v),C) \gets \textsc{Compete}(C)$
\EndLoop
\State output $m(v)$, terminate
\end{algorithmic}
\end{algorithm}

The running time of the algorithm will be dominated by the $\Theta(x)$ calls to \textsc{Verify} and the $\Theta(\frac{\log n}{x})$ calls to \textsc{Compete}. Therefore, if we let $t$ and $u$ be the running times of \textsc{Verify} and \textsc{Compete} respectively, we see that to optimize our overall running time we should set $x = \sqrt{\frac ut \log n}$.

\begin{theorem}\label{th:fixtm}
Algorithm \ref{alg:LE2} performs leader election in $O(\sqrt{tu\log n})$ time and succeeds with high probability.
\end{theorem}

\begin{proof}
With high probability, $\Theta( \log n)$ nodes choose to be candidates in $C$. First we will analyze how many candidates remain after the first loop. Let $y$ be the number of rounds of the loop during which $|C|>\frac{2\log n}{x}$, and consider only these rounds. We call such a round $i$ \emph{successful} if $\frac{5|C|}{6}> |C'|\ge 1$. The probability that any round is not successful is at most:
\begin{align*}
    \prob{|C'|=0} &+ \prob{|C'|\ge\tfrac{5|C|}{6}}
        \le
    \left(\frac 12\right)^{|C|} +
        \binom{|C|}{\frac{5|C|}{6}} \left(\frac 12\right)^\frac{5|C|}{6}
        \\
        &=
    \left(\frac 12\right)^{|C|} +
        \binom{|C|}{\frac{|C|}{6}} \left(\frac {1}{32}\right)^\frac{|C|}{6}
        \le
    \left(\frac 12\right)^{|C|} +
        \left(\frac{e \cdot |C|}{\frac{|C|}{6}}\right)^\frac{|C|}{6}
            \cdot \left(\frac{1}{32}\right)^\frac{|C|}{6}
        \\
        &=
    \left(\frac 12\right)^{|C|} + \left(\frac{3e}{16}\right)^\frac{|C|}{6}
        \le
    2 \times 0.9^{|C|}
        \le
    0.9^{\frac{\log n}{x}}
        \enspace.
\end{align*}

Note that this is still true conditioned on the randomness of all previous rounds, and so the total number of unsuccessful rounds is majorized by a binomially distributed variable $\mathbb{B}\text{in}(y,p)$, where $p=0.9^{\frac{\log n}{x}}$. If $y\ge 20x$ then
\begin{align*}
    \prob{\text{at least $\tfrac{y}{2}$ rounds are unsuccessful}}
        &\le
    \prob{\mathbb{B}\text{in}(y,p) \ge \tfrac y2}
        \le
    e^{-\frac 12 y ( \ln \frac {1}{2p} + \ln \frac {1}{2(1-p)})}
        \\
        &\le
    e^{-\frac y2 \ln \frac {0.9^{-\frac{\log n}{x}}}{2}}
        \le
    e^{-\frac y2 \ln e^{\frac{0.1\log n}{x}}}
        \\
        &=
    e^{-\frac {y\log n}{20x}}
        \le
    n^{-\log e} \enspace.
\end{align*}

So either $y< 20x$ (i.e., after $20x$ rounds $|C|<\frac{2\log n}{x}$), or with high probability at least $10x$ of the first $20x$ rounds are successful and so $|C|$ has reduced by a factor of at least $\left(\frac 56\right)^{10x}$, and hence is below $\frac{\log n}{x}$.

Then, since \textsc{Compete} reduces $|C|$ by at least $1$ per round, after $\Theta(\frac{\log n}{x})$ rounds only one candidate remains, and leader election is complete.

The running time is dominated by $\Theta(x)$ rounds of \textsc{Verify} and $\Theta(\frac{\log n}{x})$ rounds of \textsc{Compete}. With $x= \sqrt{\frac{u\log n}{t}}$, this gives a total running time of $O(\sqrt{tu\log n})$.
\end{proof}


\section{Implementation}

We now show how to implement $\textsc{Verify}(C)$ and $\textsc{Compete}(C)$ in the models we consider: radio networks without collision detection, and with collision detection. In the latter case, we will actually use the strictly weaker beep model, as any algorithm that works in the beep model also works in radio networks with collision detection.

\subsection{Radio networks without collision detection}
Our algorithms will make use of some existing techniques for the radio network model, namely methods for broadcasting messages with local neighborhoods and throughout the entire network.

\subsubsection{Local and global broadcast}
For local broadcasting, we will utilize the classical Decay protocol.
The Decay protocol, first introduced by Bar-Yehuda et al. \cite{-BGI92}, is a fundamental primitive employed by many randomized radio network communication algorithms. Its aim is to ensure that, if a node has one or more in-neighbours which wish to transmit a message, it will hear at least one of them. To accomplish this, nodes who do wish to transmit do so with exponentially decaying probability.

\begin{algorithm}[h]
\caption{\textsc{Decay}$(S)$ at a node $v \in S$}
\label{alg:D}
\small
\begin{algorithmic}
\State \textbf{for} $i = 1$ \textbf{to} $\log n$, in time step $i$ \textbf{do}: $v$ transmits its message with probability $2^{-i}$
\end{algorithmic}
\end{algorithm}

The following lemma (cf. \cite{-BGI92}) describes a basic property of \textsc{Decay}, as used in our analysis.

\begin{lemma}
\label{corollary:Decay}
After four rounds of \textsc{Decay}$(S)$, a node $v$ with at least one in-neighbor in $S$ receives a message with probability greater than $\frac12$.
\qed
\end{lemma}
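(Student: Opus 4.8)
The plan is to analyze a single ``block'' of four consecutive time steps of \textsc{Decay}$(S)$ and show that, conditioned on $v$ having at least one in-neighbor in $S$, the probability that $v$ hears exactly one transmitter in at least one of these four steps exceeds $\frac12$. Fix the node $v$ and let $d \ge 1$ be the number of in-neighbors of $v$ that lie in $S$; note $d \le n$, so all of these nodes are actively running the $\log n$-step schedule. In time step $i$, each such in-neighbor transmits independently with probability $2^{-i}$. The key observation is that there is a ``right'' step for the given value of $d$: choose $i^\star$ to be the smallest index with $2^{i^\star} \ge d$ (such $i^\star$ exists and lies in $\{1,\dots,\lceil \log n\rceil\}$ since $d \le n$), so that $\frac{1}{2d} < 2^{-i^\star} \le \frac1d$.

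First I would compute, for this step $i^\star$ with transmission probability $p := 2^{-i^\star}$, the probability that exactly one of the $d$ in-neighbors transmits, which is $d\,p\,(1-p)^{d-1}$. Using $p \le 1/d$ we get $dp \le 1$, and using $p > 1/(2d)$ we get $dp > 1/2$; combined with the bound $(1-p)^{d-1} \ge (1-p)^{d} \ge 4^{-dp}$ (valid since $p \le 1/2$, using $1-x \ge 4^{-x}$ for $x\in[0,1/2]$), one obtains $d\,p\,(1-p)^{d-1} > \frac12 \cdot 4^{-1} = \frac18$. So in the single step $i^\star$, $v$ hears a message with probability at least $\frac18$. (The precise constants can be tuned; any constant strictly bigger than $0$ in a single step suffices for the next part, provided the block length is large enough — here the number $4$ is chosen so that the final probability clears $\frac12$.)

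Next I would amplify over the four steps. The subtlety is that the events ``$v$ hears exactly one transmitter in step $i$'' for different $i$ are \emph{not} independent, since they depend on the same underlying random choices of the in-neighbors. I would handle this by instead lower-bounding the probability that $v$ hears \emph{nothing useful in all four relevant steps} — or, more cleanly, simply running the argument on four disjoint copies. The cleanest route: observe that the statement concerns four \emph{rounds} of \textsc{Decay}, i.e.\ four independent executions of the whole $\log n$-step protocol (the randomness is fresh in each round), so the events ``$v$ succeeds in round $r$'' for $r = 1,2,3,4$ are genuinely independent, each with probability at least $\frac18$ by the previous paragraph. Hence the probability $v$ fails in all four rounds is at most $(1 - \frac18)^4 = (7/8)^4 = 2401/4096 < \frac34$... which is not quite below $\frac12$.

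The main obstacle, therefore, is squeezing the constant: a naive $\frac18$ per round only yields failure probability $(7/8)^4 \approx 0.586$, not $<\frac12$. To fix this I would sharpen the single-round bound. Rather than the crude $4^{-dp}$ estimate, I would directly analyze the function $f(d) = d\,p^\star(d)\,(1-p^\star(d))^{d-1}$ over integers $d \ge 1$ and verify it is bounded below by a constant $c$ with $(1-c)^4 < \frac12$, i.e.\ $c > 1 - 2^{-1/4} \approx 0.159$. For small $d$ this is immediate by direct computation (e.g.\ $d=1$ gives $f(1)=1/2$; $d=2$ gives $2\cdot\frac14\cdot\frac34 = 3/8$; etc.), and for large $d$, taking $p^\star = 2^{-\lceil \log d\rceil}$ one shows $f(d) \ge \frac12 e^{-1} + o(1) > 0.18$ using $(1-p)^{d-1} \to e^{-dp}$ and $dp \in (1/2, 1]$; minimizing $x e^{-x}$-type expressions over $x \in [1/2,1]$ and worst-casing the discreteness gives a clean constant above $0.159$. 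Then the four independent rounds give failure probability $(1-c)^4 < \frac12$, i.e.\ success probability $> \frac12$, as claimed. I would present this as: each round succeeds independently with probability at least some explicit constant $c > 1 - 2^{-1/4}$, therefore four rounds succeed with probability greater than $\frac12$.
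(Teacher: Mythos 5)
The paper never proves this lemma at all: it is stated with a pointer to Bar-Yehuda, Goldreich and Itai \cite{-BGI92} and an immediate \qed, so your self-contained argument is necessarily a different route, and its overall plan is sound. You correctly read ``four rounds'' as four independent executions of the full $\log n$-step protocol (this matches the paper's accounting, where $64\log n$ rounds of \textsc{Decay} are charged $O(\log^2 n)$ time), you lower-bound the success probability of one round by a constant $c$, and you conclude from $(1-c)^4<\frac12$. The one place where your text is not yet a proof is exactly the step you flag yourself: the claim that a single round succeeds with probability exceeding $1-2^{-1/4}\approx 0.159$ is only sketched (``minimizing $xe^{-x}$-type expressions \dots worst-casing the discreteness''). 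It is true and easy to close rigorously: with $i^\star$ minimal such that $2^{i^\star}\ge d$ and $p=2^{-i^\star}$, for $d\ge 2$ one has $dp>\frac12$ and, since $\ln(1-p)\ge -p/(1-p)$ and $p\le 1/d$, also $(1-p)^{d-1}\ge e^{-(d-1)p/(1-p)}\ge e^{-1}$; hence $dp(1-p)^{d-1}>\frac{1}{2e}\approx 0.184>1-2^{-1/4}$, while $d=1$ gives success probability $\frac12$ directly. Four independent rounds then fail with probability at most $\bigl(1-\frac{1}{2e}\bigr)^4<0.45<\frac12$, as required. Two minor corrections: in this paper's formulation of \textsc{Decay} every node flips a fresh coin in each time step, so the events ``exactly one in-neighbor transmits in step $i$'' are in fact independent across steps within a single round (the dependence you worry about arises in the original BGI version, where nodes quit and stay quiet); your fallback to independence across rounds is still valid, just more conservative. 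Also, your $d=2$ spot-check uses $p=\frac14$, whereas the minimal $i^\star$ gives $p=\frac12$ and success probability $\frac12$ --- harmless, since both exceed the needed constant. With the explicit single-round constant filled in as above, your proof is correct.
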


While Decay has a very localized effect, to achieve global tasks we will need more complex primitives. In particular, we will also need a method of propagating information globally. For this we employ another previous result from the literature, which yields a $\textsc{Partial Multi-Broadcast}(S,f)$ algorithm with the following properties:

\begin{itemize}
\item $S \subseteq V$ is a set of \emph{source nodes};
\item $f:S \rightarrow \{0,1\}^{\ell}$, where $\ell = O(\log n)$ is some message length parameter, is a function giving each source a message to broadcast; in our applications, this will either be a node ID, or a single bit ``1'';
\item Each node $v$ interprets some bit-string $m(v)$ upon completion;
    \begin{itemize}
    \item if $S = \emptyset$, then $m(v) = \epsilon$ (the empty string) for all $v$;
    \item if $S \neq \emptyset$, then $\forall v \in V$ $\exists s \in S$ with $m(v) = f(s)$, i.e., each node interprets some source's message.
    \end{itemize}
\end{itemize}

The broadcasting algorithm of Czumaj and Rytter \cite{-CR06J}, performed with every node in $S$ operating as a single source and with nodes interpreting the first transmission they receive to be their output message $m(v)$, yields:

\begin{lemma}[\cite{-CR06J}]
\label{lemma:Partial Multi-Broadcast-algorithm}
There is a Partial Multi-Broadcast algorithm running in time $O(D \log \frac{n}{D} + \log^2 n)$, which succeeds with high probability.
\qed
\end{lemma}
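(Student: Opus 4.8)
The plan is to run the single-source broadcast protocol of Czumaj and Rytter \cite{-CR06} essentially verbatim, but with every node of $S$ acting as a source simultaneously, each $s\in S$ injecting its own string $f(s)$. Concretely: a node that is not yet ``informed'' stays silent until the first round in which it successfully receives a transmission (exactly one in-neighbour transmits); it then sets $m(v)$ to the string carried by that transmission and from that point on behaves as an informed node of the \cite{-CR06} protocol. The protocol is run for the fixed number $c(D\log\frac nD+\log^2 n)$ of rounds dictated by \cite{-CR06} (for a suitable constant $c$), so every node knows exactly when it terminates, which meets the synchronisation requirement. If $S=\emptyset$ then no node ever transmits, so no node ever becomes informed and every node outputs $m(v)=\epsilon$, as required.

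For $S\neq\emptyset$, note first that once a node adopts a value it never changes it, so it suffices to show that w.h.p.\ every node becomes informed within the allotted time, and that the value it adopts is $f(s)$ for some $s\in S$. The latter is immediate: the only strings ever present in the network are the $f(s)$, $s\in S$ (an informed node only ever relays the string it holds), so whatever a node hears is some $f(s)$. For the former, I would reduce to the single-source analysis of \cite{-CR06} by a virtual-source argument: imagine a fictitious node $s^\star$ with an out-edge to every node of $S$, and regard the round in which the members of $S$ become informed as the transmission of $s^\star$. Then the execution above is exactly \cite{-CR06} run from $s^\star$ on a network in which every node is within distance $D+1$ of $s^\star$ (using that the network is strongly connected / connected of eccentricity $D$, so every node of $V$ is within distance $D$ of any fixed $s\in S$). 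The analysis of \cite{-CR06} then guarantees that within $O((D+1)\log\frac n{D+1}+\log^2 n)=O(D\log\frac nD+\log^2 n)$ rounds every node of $\mathcal N$ has been reached, w.h.p., and a union bound over the polynomially many events invoked keeps the overall failure probability $n^{-\Omega(1)}$.

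The point that needs care, and which I expect to be the main obstacle, is checking that the correctness and timing argument of \cite{-CR06} is genuinely insensitive to starting from a set of sources rather than a single one --- equivalently, that the virtual node $s^\star$ can be simulated with no loss. This relies on two features of the \cite{-CR06} protocol: each node's behaviour depends only on the global round number and on how long it has been informed, so several nodes becoming informed in the same round is a legitimate protocol state; and the layer-by-layer progress estimates of \cite{-CR06}, which bound the time for the informed set to absorb the next frontier layer, only use that the informed set is non-empty and that the relevant distances are at most the eccentricity, both of which hold here with the distance from $S$ to any node bounded by $D$. Finally, collisions between transmissions carrying distinct strings $f(s)$ are no worse than collisions in the single-source case, since in the model without collision detection a collision annihilates the message regardless of its contents, so no new failure mode is introduced. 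Once this is in place, the running-time and high-probability guarantees transfer directly from \cite{-CR06}.
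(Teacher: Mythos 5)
Your proposal is exactly the paper's approach: the paper asserts the lemma by running the Czumaj--Rytter broadcasting algorithm with every node of $S$ acting as a source and each node interpreting the first transmission it receives as $m(v)$, giving no further proof. Your virtual-source reduction and the observation that transmission decisions are content-independent correctly fill in the details the paper leaves implicit, so the argument is sound and essentially identical in route.
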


Armed with procedures for both local and global message dissemination, we can implement \textsc{Verify} and \textsc{Compete}:

\subsubsection{Subprocedure Implementation}
In radio networks without collision detection, our implementations of \textsc{Verify} and \textsc{Compete} both run in time $T_{BC} = O(D \log \frac nD+\log^2 n)$ time, and we can in fact combine them into the same process (Algorithm \ref{alg:VCWOCD}; here the output triple contains the necessary outputs for both \textsc{Verify} and \textsc{Compete}). The idea of both is to make use of \textsc{Partial Multi-Broadcast} to inform all nodes of at least one candidate ID, check if any neighboring nodes received different IDs, and performing \textsc{Partial Multi-Broadcast} again to inform the network if this was the case. To meet the stricter requirements of \textsc{Compete}, we have nodes who detected differing IDs broadcast the highest ID they heard in this final phase. Then, if there are multiple candidates, at least one of the candidates (the one with the lowest ID) becomes aware of another with a higher ID and drops out.

We note that this implementation is similar to the method used in the seminal paper by Bar-Yehuda, Goldreich and Itai \cite{-BGI91} to simulate algorithms for single-hop networks within multi-hop networks.

\begin{algorithm}[h]
\caption{\textsc{Verify\&Compete}(C), networks without collision detection}
\label{alg:VCWOCD}
\small
\begin{algorithmic}
\State $m(v) \gets \textsc{Partial Multi-Broadcast}(C,\text{ID}(C))$
\If {$m(v) \neq \epsilon$}
	\For {$i = 1$ to ID length}
		\State let $v \in S_i$ if the $i^{th} $ bit of $v$'s message $m(v)_i$ is $1$
		\State perform four rounds of \textsc{Decay}$(S_i)$
		\If { $v$ receives a node ID but $m(v)_i = 0$}
			\State $v$ becomes a member of the set $W$ of \textbf{witnesses}
			\State $m(v)\gets $ highest ID $v$ knows
		\EndIf
	\EndFor
	\State $p(v) \gets \textsc{Partial Multi-Broadcast}(W,m(W))$
    \State \textbf{if} $p(v) > \text{ID}(v)$ \textbf{then} $v$ drops out of $C$
	\State \textbf{if} $p(v) = \epsilon$ \textbf{then} $v$ outputs $(m(v),C,1)$, procedure terminates
    \State $v$ outputs $(m(v),C,2)$, procedure terminates
\Else
	\State $v$ outputs $(m(v),C,0)$, procedure terminates
\EndIf
\end{algorithmic}
\end{algorithm}

\begin{theorem}\label{th:VCWOCD}
Algorithm \ref{alg:VCWOCD} performs both \textsc{Verify} and \textsc{Compete} within time $O(T_{BC}) = O(D \log \frac nD+\log^2 n)$, and succeeds with high probability.
\end{theorem}

\begin{proof}
If $C$ is empty, the initial call of \textsc{Partial Multi-Broadcast} will involve no transmissions, so all nodes will output $(\epsilon,\emptyset,0)$, satisfying the requirements.

If $|C| = 1$ then all nodes will receive the ID of its sole member. Therefore there will be no iteration of the for loop in which one node performs \textsc{Decay} and another does not, and so no nodes will become witnesses. Then, the second invocation of \textsc{Partial Multi-Broadcast} will involve no transmissions, so nodes output $(m(v),C,1)$ with $m(v)$ being the ID of $C$'s member, as required.

If $|C|>1$ then there will be at least one pair of neighboring nodes who received different IDs during the first step. Since IDs are $\Theta(\log n)$-bit strings chosen uniformly at random, with high probability every pair of IDs differs on $\Omega(\log n)$ positions (this can easily be seen by taking a union bound over all pairs). For each such position, after four rounds of \textsc{Decay} the node whose received ID has a \textbf 1 in the position will inform the neighboring node with constant probability, by Lemma \ref{corollary:Decay}. Since this event is independent for each of the $\Omega(\log n)$ positions, with high probability the \textsc{Decay} phase succeeds in at least one of them. This results in a node becoming a witness. So, in the second call of \textsc{Partial Multi-Broadcast}, all nodes receive some message and output $(m(v),C,2)$. Clearly this satisfies the conditions of \textsc{Verify}. For \textsc{Compete} we require that at least one node dropped out of $C$. To see this, consider the node $v$ in $C$ which had the lowest ID before the procedure. In the second \textsc{Partial Multi-Broadcast} call it receives an ID from some witness. The witness compared at least two IDs and picked the highest to broadcast. Therefore the ID it picked must have been higher than $v$'s, so $v$ will drop out of $C$.

The running time is dominated by two calls of \textsc{Partial Multi-Broadcast} taking $O(D \log \frac nD+\log^2 n)$ time, and by $O(\log n)$ rounds of \textsc{Decay}, taking $O(\log^2 n)$ time. Therefore total running is $O(D \log \frac nD+\log^2 n)$.
\end{proof}
\subsection{Radio networks with collision detection}
When collision detection is available, we can speed up the \textsc{Verify} procedure by utilizing \emph{beep waves} for broadcasting. As mentioned, our implementations for this setting actually work for the strictly weaker beep model, in which nodes can only transmit a beep or silence.

\subsubsection{Beep Waves}

To study the algorithms for the beep model, we incorporate \emph{beep waves}, first introduced in \cite{-GH13} and formalized in \cite{-CD15}, which are a means of propagating information through the network one bit at a time via waves of collisions.

For our specific purposes, we require procedure $\textsc{Beep-Wave}(S,f)$ which satisfies the following:

\begin{itemize}
\item $S \subseteq V$ is a (possibly empty) set of source nodes;
\item $f:S \rightarrow \{0,1\}^{\ell}$, where $\ell = O(\log n)$ is some message length parameter, is a function giving each source a bit-string to broadcast.
\item Each node interprets a string $m(v)$;
    \begin{itemize}
    \item If $S = \emptyset$, then $m(v) = \epsilon$ (the empty string) for all $v$;
    \item If $S = \{s\}$, for some $s\in V$, then $\forall v \in V$, $m(v)=f(s)$;
    \item If $|S| >1$, then for all $v$, $m(v) \neq \epsilon$. Furthermore, there exists $w \in V$ and two distinct $u, v \in S$ (we allow $w \in \{u, v\}$) such that $m(w) = f(u) \vee f(v) \vee m$, for some bit-string $m$.
    \end{itemize}
\end{itemize}

The last condition may seem convoluted; the reason for it is that, while we cannot guarantee messages are correctly received as in the single source case, we will at least need some means of telling that there were indeed multiple sources. This will be detailed later, but for now we require that, as well as all nodes receiving some non-empty message, at least one receives the logical \textbf{OR} of two source messages, possibly with some extra \textbf{1}s.

We achieve these conditions with Algorithm~\ref{alg:BW}. The intuition behind the algorithm is that a source uses beeps and silence to transmit the $1$s and $0$s of its message respectively (prefixed by a $1$ so that it is obvious when transmission starts), and other nodes forward this pattern, one adjacency layer per time-step, by simply relaying a beep when they hear one. This process takes $O(D + \ell)$ time, where $\ell$, as introduced above, is the maximum length, in bits, of message that is to be broadcast. Since we are here making the restriction $\ell = O(\log n)$, our running time is $O(D + \log n)$.

Note that this method only works in \emph{undirected networks}, since in directed networks it is impossible to prevent beep waves from interfering with subsequent ones, and consequently broadcasting in directed networks in the beep model is much slower.

\begin{algorithm}[ht]
\caption{\textsc{Beep-Wave}$(S,f)$ at a node $v$}
\label{alg:BW}
\small
\begin{algorithmic}
\State initialize $m(v)_i=0$ $\forall i \in [\ell]$
\If {$v \in S$}
	\State $v$ beeps in time-step $0$
	\For {$i = 1$ to $\ell$}
            \State \textbf{if} the $i^{th}$ bit of $f(v)$ is $1$ \textbf{then} $v$ beeps in time time-step $3i$ and $m(v)_i \gets 1$
            \State \textbf{else} $v$ receives a beep in time time-step $3i$ and $m(v)_i \gets 1$
	\EndFor
\Else
	\State {Let $j$ be the time-step in which $v$ receives its first beep}
	\State {$v$ beeps in time-step $j+1$}
	\For {$i = 1$ to $\ell$}
        	\If {$v$ receives a beep in time-step $j+3i$}
            	\State $v$ beeps in time-step $j+3i+1$ and $m(v)_i \gets 1$
            \Else
            	\State $m(v)_i \gets 0$
            \EndIf
	\EndFor
\EndIf
\end{algorithmic}
\end{algorithm}

We prove that the algorithm has the desired behavior:

\begin{lemma}
\label{lem:BW1}
If \textsc{Beep-Wave}$(S,f)$ is run with $S=\{s\}$, then $\forall v \in V$, $ m(v)=f(s)$
\end{lemma}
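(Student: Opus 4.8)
The plan is to give an exact, time-indexed description of the global beep pattern and then read the lemma off from it. Write $d_v := \mathrm{dist}(s,v)$ for the distance in the (connected, undirected) network from the unique source $s$. I will prove, by induction on the time-step $t$, that every node $v$ beeps in step $t$ if and only if
\[
 t \in B_v := \{d_v\}\cup\{\,d_v+3i : i\in[\ell],\ f(s)_i=1\,\},
\]
where for $v=s$ we take $d_s=0$, so that the source's unconditional beep in step $0$ is the ``$i=0$'' term of $B_s$. In this framing the ``$\{d_v\}$'' term just says that every node beeps once, unconditionally, at step $d_v$ (the source in step $0$; a non-source node one step after it first hears a beep), and $B_s=\{0\}\cup\{3i:f(s)_i=1\}$ matches the source branch of Algorithm~\ref{alg:BW} exactly.

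Before the induction I record the standard BFS fact that neighbouring vertices have distances differing by at most $1$: every neighbour of $v$ has distance in $\{d_v-1,d_v,d_v+1\}$, and (when $v\neq s$) at least one neighbour lies on a shortest $s$--$v$ path and hence has distance $d_v-1$. Granting the description $B_\cdot$, this has two consequences. First, a non-source $v$ hears its first beep in step $d_v-1$: some distance-$(d_v-1)$ neighbour beeps then, and no neighbour beeps earlier since $\min B_u=d_u\ge d_v-1$ for every neighbour $u$. Hence $v$ sets $j=d_v-1$. Second, in the loop $v$ sets $m(v)_i=1$ iff it hears a beep in step $j+3i=d_v-1+3i$, and I will show that among $v$'s neighbours only the distance-$(d_v-1)$ ones can beep in that step, and they do so iff $f(s)_i=1$. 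Thus $m(v)_i=f(s)_i$ for all $i$, i.e.\ $m(v)=f(s)$; and $m(s)=f(s)$ is immediate from the source branch. Since the network is connected this covers all $v$, proving the lemma.

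For the induction, the base case $t=0$ is clear: the source beeps (its step-$0$ beep), while no non-source node can beep in step $0$, its first relay beep occurring in some step $\ge 1$; this matches $B_\cdot$. For the inductive step, assume the pattern agrees with $B_\cdot$ on all steps $\le t$. A node's behaviour in step $t+1$ depends only on its own membership in $S$ and its bits, and on which of its neighbours beeped in steps $\le t$ — data supplied by the hypothesis. For a non-source $v$ we know $j=d_v-1$, so $v$ beeps in step $t+1$ precisely when $t+1=d_v$ (having heard its first beep in step $d_v-1$, which it did) or $t+1=d_v+3i$ and $v$ heard a beep in step $d_v-1+3i$; using $B_u$ one checks that a neighbour at distance $d_v-1$ beeps in step $d_v-1+3i$ iff $f(s)_i=1$, whereas neighbours at distance $d_v$ or $d_v+1$ cannot beep in that step. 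Hence $v$ beeps in step $t+1$ iff $t+1\in B_v$, and $m(v)_i$ is set to $1$ iff $f(s)_i=1$. The source case is immediate, closing the induction.

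The main obstacle is that last checking step: ruling out the ``spurious'' backward- and sideways-propagating echoes that every relay generates, so that a node never mistakes one of them for the next content bit. This is precisely what the gap of $3$ (rather than $2$) between consecutive content beeps guarantees — it forces the relevant step differences to avoid $1$ and $2$ — and it is the reason the algorithm transmits bit $i$ at time $3i$. A minor but necessary companion point, which I therefore prove inside the same induction rather than separately, is the off-by-one identity $j=\mathrm{dist}(s,v)-1$ for every non-source node.
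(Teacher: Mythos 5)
Your proof is correct and takes essentially the same approach as the paper's: both partition nodes by distance from $s$ and use the mod-$3$ spacing of content beeps to show that only beeps from the preceding distance layer can be relayed, so each layer replays the source's pattern with delay equal to its distance. Your induction on time with the explicit beep-time sets $B_v$ (and the identity $j=\mathrm{dist}(s,v)-1$) is simply a more formal rendering of the paper's ``layers act in unison'' argument.
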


\begin{proof}
Partition all nodes into layers depending on their distance from the source $s$, i.e., layer $L_i = \{v\in V : dist(v,s) = i\}$. We first note that a node in layer $i$ beeps for the first time in time-step $i$, since this first beep will propagate through the network one layer per time-step. The algorithm then ensures that such a node will beep only in time-steps equivalent to $i \bmod 3$, and only if a beep was heard in the previous step. Since all neighbors of the node must be in layers $i-1, i,$ and $i+1$, only messages from neighbors in layer $i-1$ can be relayed (as these are the only neighbors whose beeps are in time-steps equivalent to $i - 1 \bmod 3$).
It is then easy to see that layers act in unison, and beep if and only if the previous layer beeped in the previous time-step.
\end{proof}


\begin{lemma}
\label{lem:BW2}
If \textsc{Beep-Wave}$(S,f)$ is run with $|S|>1$ then there exists $w \in V$ and two distinct $u, v \in S$ (we allow $w \in \{u, v\}$) such that $m(w) = f(u) \vee f(v) \vee m$, for some bit-string $m$.
\end{lemma}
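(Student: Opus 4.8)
The plan is to trace the beep waves exactly and reduce the statement to a combinatorial fact about the BFS structure of the network from the source set. Write $d(w) = \min_{s \in S} dist(w,s)$ for the distance from a node $w$ to a nearest source, and let $N(w) = \{s \in S : dist(w,s) = d(w)\}$ be the set of sources nearest to $w$. What I ultimately want is a node $w$ and distinct $u,v \in S$ with $m(w) \supseteq f(u) \vee f(v)$ bitwise; taking $m := m(w)$ then gives $m(w) = f(u) \vee f(v) \vee m$, which is the conclusion.

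First I would pin down $m(w)$ precisely, by induction on $d(w)$ as in the proof of Lemma~\ref{lem:BW1}. Every node $w$ first beeps in time-step $d(w)$ (time-step $0$ if $w \in S$), so -- because all sources beep in time-step $0$ -- each wave advances exactly one BFS-layer per time-step; and a node's ``listen step for bit $i$'' is time-step $d(w)-1+3i$ (a source additionally listens in time-step $3i$ on exactly the indices where $f(w)_i = 0$). The period-$3$ spacing of the relays, together with the unit propagation speed, forces the only beeps that can land on $w$'s listen step for bit $i$ to come from neighbours at BFS-distance $d(w)-1$ relaying their own bit $i$ (and, when $w\in S$, from an adjacent source beeping its bit $i$): neighbours in the same or a farther layer beep in time-steps of the wrong residue modulo $3$. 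Unrolling, this gives $m(w) = \bigvee_{s \in N(w)} f(s)$ when $w \notin S$, and $m(w) \supseteq f(w) \vee \bigvee_{s \in S,\, s \sim w} f(s)$ when $w \in S$.

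It therefore suffices to exhibit a node $w$ for which the set $T(w)$ -- equal to $N(w)$ if $w\notin S$, and to $\{w\}\cup\{s\in S : s\sim w\}$ if $w\in S$ -- contains two distinct sources, since then $m(w) \supseteq f(u)\vee f(v)$ for those two. If two sources $u,v$ are adjacent, take $w = u$ and we are done. Otherwise all source-to-source distances are at least $2$; I would take a pair $u,v$ at minimum distance $k = dist(u,v)$, fix a shortest $u$--$v$ path, and take its midpoint $w$. If $k$ is even then $dist(w,u) = dist(w,v) = k/2$, and minimality of $k$ rules out any third source lying strictly closer to $w$, so $\{u,v\}\subseteq N(w)$ and $m(w)\supseteq f(u)\vee f(v)$. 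If $k$ is odd the ``midpoint'' is an edge $\{x,y\}$ of the path rather than a vertex, and I would look one step beyond it (or re-select the witnessing pair) to again produce either an equidistant vertex or an adjacent pair.

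The step I expect to be the main obstacle is exactly this last one: guaranteeing a vertex equidistant from two distinct sources (or a pair of adjacent sources) whenever $|S| > 1$. The phase analysis above shows a node only ever hears a source $s$'s wave \emph{in phase} when $dist(w,s) = d(w)$, so an in-phase collision of two waves at $w$ forces $dist(w,u)=dist(w,v)=d(w)$ -- the lemma reduces to such a $w$ existing. The even-distance case is immediate, but the odd case is genuinely delicate: one has to rule out the two Voronoi-like cells of $u$ and $v$ meeting only across an edge whose two endpoints each have a strictly nearer source. I would handle this by choosing the witnessing pair more carefully (minimising $dist(u,v)$ and then tie-breaking using the layer structure) and arguing directly on a boundary edge of the BFS layering, rather than via the naive midpoint; this is the part of the argument I would budget the most care for.
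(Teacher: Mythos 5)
Your adjacent-source and even-distance cases coincide with the paper's argument (closest pair, midpoint, minimality of $\mathrm{dist}(u,v)$ to keep the two waves in phase along the two half-paths), and your layered phase analysis is a sharper version of what the paper uses implicitly. But the step you flagged as the main obstacle, the odd-distance case, is not just delicate: under your reduction it cannot be closed. Your exact characterization $m(w)=\bigvee_{s\in N(w)}f(s)$ for $w\notin S$ (and $m(w)=f(w)\vee\bigvee_{s\sim w, s\in S}f(s)$ for sources) reduces the lemma to the existence of a vertex equidistant from two sources or of two adjacent sources, and such a vertex need not exist. Take the four-vertex path $u,a,b,v$ with $S=\{u,v\}$, so $\mathrm{dist}(u,v)=3$: then $N(a)=\{u\}$, $N(b)=\{v\}$, no vertex is equidistant from two sources and no two sources are adjacent, and tracing the waves gives $m(u)=m(a)=f(u)$ and $m(b)=m(v)=f(v)$. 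For generic $f(u)\neq f(v)$ (neither dominating the other bitwise) no node then satisfies $m(w)=f(u)\vee f(v)\vee m$. So no cleverer choice of witnessing pair or boundary-edge argument can produce the in-phase collision you are looking for; with your strict reading of the listening rule the lemma itself fails on this instance, not merely your proof of it.

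The paper's proof closes the odd case by a different mechanism, namely the one your phase analysis discards as ``wrong residue''. It takes the closest pair $u,v$, a near-midpoint $w$ with $\mathrm{dist}(w,u)=t$, $\mathrm{dist}(w,v)=t+1$ and $j=t-1$, and observes that the bit-$i$ wave from $u$ reaches $w$ in step $j+3i$ while the wave from $v$ reaches it in step $j+3i+1$, one step late; it then counts this late reception (or $w$'s own beep in that step, which only occurs when $m(w)_i$ is already $1$) as setting $m(w)_i=1$. This requires reading Algorithm~\ref{alg:BW} as also recording a beep heard in step $j+3i+1$ — the pseudocode's listening rule is admittedly garbled at this point, but that is the reading the paper's argument needs, and it does not disturb Lemma~\ref{lem:BW1}, since with a single source the only nodes that can beep in that slot are same-layer neighbours relaying the same bit. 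In particular, under that reading your equality $m(w)=\bigvee_{s\in N(w)}f(s)$ becomes only a lower bound: $m(w)$ can also absorb bits from a source at distance $d(w)+1$ whose wave arrives one step behind, and that slack is exactly what proves the odd case. To repair your proposal you would have to incorporate these $j+3i+1$ receptions rather than hunt for an equidistant vertex.
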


\begin{proof}
Let $u,v$ be the closest pair of sources in the graph. Let $w$ be the midpoint on the shortest $u \rightarrow v$ path (if the path is of odd length, pick either midpoint arbitrarily).

If $w$ is a source, then we can assume, without loss of generality, that $w = u$ and $v$ is an adjacent source. Then if $f(w)_i = 1$ or $f(v)_i = 1$, $w$ receives or transmits a beep in time-step $3i$ and sets $m(w)_i = 1$, so we are done.

Otherwise, assume without loss of generality that $dist(w,u) \le dist(w,v) \le dist(w,u)+1$, and denote $j := dist(w,u)-1$. $w$ receives its first beep in time-step $j$. Then, since $u$ is the closest source to every node along the shortest $u \rightarrow w$ path and $v$ is the closest source to every node along the shortest $v \rightarrow w$ path, beeps from $u$ and $v$ will always be relayed along these paths. So, if $f(u)_i = 1$, $w$ receives a beep in time-step $j+3i$, and if $f(v)_i = 1$, $w$ receives a beep in time-step $dist(v,w) - 1+3i = j+3i$ or $j+3i+1$ (unless $w$ beeps itself in time-step $j+3i+1$, i.e., it received a beep in time-step $j+3i$). In either case, $m(w)_i$ is set to $1$.
\end{proof}


\begin{lemma}
\label{lem:BW3}
Algorithm \ref*{alg:BW} correctly achieves the Beep-Wave conditions in $O(D+\ell)$ time-steps.
\end{lemma}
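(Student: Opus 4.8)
The plan is to verify the two conditions—correctness and running time—separately, leaning on the two preceding lemmas for the hard parts. First I would handle the running time: inspecting Algorithm~\ref{alg:BW}, a source beeps in time-step $0$ and then acts in time-steps $3i$ for $i$ up to $\ell$, finishing by time-step $3\ell$; a non-source with first beep in step $j$ finishes by step $j+3\ell+1$. Since the first beep propagates one layer per time-step and the eccentricity is $D$, every node has $j \le D$, so all nodes terminate by time-step $3\ell + D + 1 = O(D+\ell)$. This also requires noting that all nodes know when to stop, which follows from the synchronization remarks in Section~2 (all nodes know $n$ and $D$, hence the fixed schedule).

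Next I would dispatch the three cases of the Beep-Wave specification. The case $S=\emptyset$ is immediate: no node ever beeps, so no $m(v)_i$ is ever set to $1$, giving $m(v)=\epsilon$ for all $v$—though I'd want to reconcile this with the algorithm as written (the ``else'' branches literally set $m(v)_i=1$ unconditionally, which looks like a typo for $0$; I would treat the intended semantics where $m(v)_i$ stays $0$ unless a beep is sent/received at the right step). The single-source case $S=\{s\}$ is exactly Lemma~\ref{lem:BW1}, which already gives $m(v)=f(s)$ for all $v$; in particular $m(v)\neq\epsilon$ whenever $f(s)\neq\epsilon$. For the multi-source case $|S|>1$, Lemma~\ref{lem:BW2} supplies the witness node $w$ with $m(w)=f(u)\vee f(v)\vee m$; it remains to argue that \emph{every} node $v$ has $m(v)\neq\epsilon$. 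This follows because with $|S|>1$ each source $s$ still initiates a beep wave (the $i$-th beep being sent whenever $f(s)_i=1$), and since $f$ maps into $\{0,1\}^\ell$ with the messages being IDs or the string ``$1$'', at least one bit of some source message is $1$; that bit's beep reaches every node along shortest paths from the nearest source, so every node sets at least one $m(v)_i=1$. (If one insists $f$ could in principle be all-zero, one would instead invoke the fact that even an all-zero message still sends the prefix beep in step $0$, which is enough to make downstream nodes beep and propagate; I would phrase the argument to cover both readings.)

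The main obstacle I anticipate is the multi-source ``every node receives a non-empty message'' claim, since Lemmas~\ref{lem:BW1} and \ref{lem:BW2} only directly give the witness and the single-source behavior. The subtlety is that with multiple overlapping beep waves, a given node's beep in some step might be ``caused'' by one source's $1$-bit in position $i$ while a different source would have wanted a $0$ there—but for the specification we only need \emph{some} bit of $m(v)$ to be $1$, not that $m(v)$ equal any particular source's string, so collisions only ever add $1$s and never erase them. I would make this monotonicity explicit: once a node beeps in step $j+3i$ (or $j+3i+1$) it sets $m(v)_i=1$ and nothing later unsets it, and at least one such event occurs for every node because the wave from the nearest source always arrives. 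Assembling these observations—running time $O(D+\ell)$, the empty and single-source cases, the witness from Lemma~\ref{lem:BW2}, and the ``all messages non-empty'' monotonicity argument—yields all the stated Beep-Wave conditions and completes the proof of Lemma~\ref{lem:BW3}.
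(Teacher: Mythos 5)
Your proof is correct and follows essentially the same route as the paper's: the empty case is handled by noting no node beeps, the single- and multi-source cases are delegated to Lemmas~\ref{lem:BW1} and \ref{lem:BW2}, and the $O(D+\ell)$ bound comes from sources ceasing transmission after $O(\ell)$ steps plus one layer of propagation per time-step. Your additional care about the ``$m(v)\neq\epsilon$ for all $v$'' clause when $|S|>1$ (and your reading of the evident typo in the else-branches) supplies detail that the paper's one-line correctness argument leaves implicit, but it does not constitute a different approach.
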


\begin{proof}
Correctness for the cases $|S| = 1$ and $|S| >1$ follow from Lemmas \ref{lem:BW1}--\ref{lem:BW2}, and the case $S = \emptyset$ follows since no node ever beeps. To analyze the running time: clearly all sources will have ceased transmission after $O(\ell)$ time, and since beeps are propagated through the network one layer per time-step, it may be a further $D$ time-steps before a source's last beep is heard by the whole network, yielding $O(D+\ell)$ time.
\end{proof}

\subsubsection{Subprocedure Implementation}
We adapt \textsc{Verify} to make use of \textsc{Beep-Wave}, which in these circumstances is faster than \textsc{Partial Multi-Broadcast}.

\begin{algorithm}[h]
\caption{\textsc{Verify}(C) in radio networks with collision detection}
\label{alg:VCD}
\small
\begin{algorithmic}
\State $m(v) \gets \textsc{Beep-Wave}(C,\text{ID*}(C))$
\If {$m(v) \neq \epsilon$}
	\State \textbf{if} $m(v)$ has more than $10\log n$ \textbf{1}s \textbf{then} $v$ becomes a member of witness set $W$
	\State $p(v) \gets \textsc{Beep-Wave}(W,1)$
    \State \textbf{if} $p(v) = \epsilon$ \textbf{then} $v$ outputs $(m(v),1)$, procedure terminates
    \State \textbf{else} $v$ outputs $(m(v),2)$, procedure terminates
\Else
	\State $v$ outputs $(m(v),0)$, procedure terminates
\EndIf
\end{algorithmic}
\end{algorithm}

The idea of this algorithm is similar to that of Algorithm \ref{alg:VCWOCD}, except that now to identify ID clashes we use a property of \textsc{Beep-Wave} rather than rounds of \textsc{Decay}.

To do so we must make a change to candidate IDs: when we have a clash of IDs, what we will get is the logical \textbf{OR} superimposition of the bit-strings, as per the specification of \textsc{Beep-Wave}. Hence, if all IDs have the same number of \textbf{1}s, we will be able to identify such a clash since there will be more \textbf{1}s than a single ID. So, in our verify algorithm we use ID* to mean original ID with $10\log n$ extra bits appended with the purpose of padding the number of \textbf{1}s to exactly $10\log n$.

\begin{theorem}\label{th:VWCD}
Algorithm \ref{alg:VCD} performs \textsc{Verify} within time $O(D+\log n)$ and succeeds with high probability.
\end{theorem}

\begin{proof}
If $C$ is empty, the initial call of \textsc{Beep-Wave} will involve no transmissions, so all nodes will output $(\epsilon,0)$, satisfying the requirements.

If $|C| = 1$ then all nodes will receive the ID of its sole member, which has exactly $10\log n$ \textbf{1}s. Therefore there will be no witnesses, the second invocation of \textsc{Beep-Wave} will involve no transmissions, so nodes output $(m(v),1)$ with $m(v)$ being the ID of $C$'s member, as required.

If $|C|>1$ then by the properties of \textsc{Beep-Wave} at least one node will receive the logical \textbf{OR} superimposition of two IDs (and possibly some other string). Since the IDs are unique w.h.p. (and this remains true with our modification to ID*), this superimposition must have more than $10\log n$ \textbf{1}s. So, at least one node will become a witness and will broadcast during the second invocation of \textsc{Beep-Wave}, and therefore all nodes will output $(m(v),2)$ as required.

The running time is dominated by two calls of \textsc{Beep-Wave} taking $O(D+\log n)$ time.
\end{proof}

We cannot use similar methods to speed up \textsc{Compete}, since the single bit of information that witnesses broadcast is not sufficient to guarantee that at least one candidate will drop out, and we cannot quickly broadcast longer messages from multiple sources without interference. Furthermore, even with a faster \textsc{Compete} sub-procedure, our fixed running time framework would still yield an algorithm slower than the $O((D + \log n\log\log n) \min\{\log\log n, \log\frac{n}{D}\})$ time algorithm of \cite{-GH13}. Therefore we only use our variable-time framework for networks with collision detection.

\section{Running Times}
We can now plug the running times of our implementations of the \textsc{Verify} and \textsc{Compete} sub-procedures (given by Theorems \ref{th:VCWOCD} and \ref{th:VWCD}) into our framework results (Theorems \ref{th:vartm} and \ref{th:fixtm}) to obtain the following leader election results:

\begin{theorem}\label{th:var}
Leader election can be performed in radio networks (either directed or undirected) without collision detection within $O(D \log\frac{n}{D}+\log^2n)$ time \emph{in expectation}, succeeding with high probability.
\end{theorem}

\begin{proof}
Follows immediately from Theorems \ref{th:vartm} and \ref{th:VCWOCD}.
\end{proof}

This running time is asymptotically optimal, and improves by a $\Theta(\log\log n)$ factor over the previous best result of \cite{-GH13}, see Table \ref{table:NOCD}.

\begin{theorem}\label{th:fix}
Leader election can be performed in radio networks (either directed or undirected) without collision detection in $O((D \log\frac{n}{D}+\log^2n)\sqrt{\log n})$ time, succeeding with high probability.
\end{theorem}

\begin{proof}
Follows immediately from Theorems \ref{th:fixtm} and \ref{th:VCWOCD}.
\end{proof}

While this algorithm is slower than that of \cite{-GH13}, it has the benefit of working in directed networks as well as undirected. In directed networks, it is a $\Theta(\sqrt{\log n})$ factor faster than previous results, see Table \ref{table:NOCD}.

Though these results are given by two different frameworks, we can easily combine them into a single algorithm, either by interspersing steps of the two algorithms alternately, or by running Algorithm \ref{alg:LE} for $O((D \log\frac{n}{D}+\log^2n)\sqrt{\log n})$ time and then running Algorithm \ref{alg:LE2}. Consequently, we can achieve good expected and worst case running times concurrently.

For undirected networks in the model with collision detection and in the beep model, we can obtain the following stronger bound (see Table \ref{table:CD} for earlier results).

\begin{theorem}\label{th:cd}
Leader election can be performed in undirected radio networks with collision detection within $O(D + \log n)$ time in expectation, succeeding with high probability.
\end{theorem}

\begin{proof}
Follows immediately from Theorems \ref{th:vartm} and \ref{th:VWCD}.
\end{proof}


\section{Conclusion}
In this paper we presented a framework for leader election in radio networks which yield randomized leader elections taking optimal $O(\text{broadcasting time})$ in expectation, and another which yields algorithms taking fixed $O(\sqrt{\log n})$-times broadcasting time. Both these algorithms succeed with high probability. With this, we provide further evidence that the classical bounds for leader elections using the general randomized framework of simulating single-hop networks with collision detection by multi-hop networks without collision detection due to Bar-Yehuda et al.\ \cite{-BGI91} are suboptimal and can be improved.

There are several important open question in this area. Firstly, can one obtain stronger expected-time bounds in the case of directed graphs with collision detection? Here, the method of beep-waves used as the backbone of Algorithm \ref{alg:VCD} does not provide the same guarantees, as a wave can revisit nodes it already passed through and be interpreted as a new transmission. The best known leader election algorithm in this setting is Algorithm~\ref{alg:LE} with an expected running time of $O(D\log\frac{n}{D} + \log^2 n)$, but the only known lower bound is the same as that of undirected graphs, $\Omega(D+\log n)$, and we would wish to close this gap.

Secondly, what is the worst-case complexity of the leader election problem? The best results in this direction, for undirected graphs, are the algorithms of Ghaffari and Haeupler \cite{-GH13}, with running times of $O(D \log\frac{n}{D} + \log^3 n) \cdot \min\{\log \log n, \log\frac{n}{D}\}$ without collision detection, and $O(D + \log n \log\log n) \cdot \min\{\log\log n, \log\frac{n}{D}\}$ with collision detection, but these running times are off from the respective lower bounds by both an additive term and multiplicative factor. For the directed case our $O((D \log\frac{n}{D} + \log^2 n)\cdot \sqrt{\log n})$-time algorithm is the best known, but is again slower than the $O(D \log\frac{n}{D} + \log^2 n)$ lower bound by a multiplicative factor.

\newcommand{\Proc}{Proceedings of the\xspace}

\newcommand{\STOC}{Annual ACM Symposium on Theory of Computing (STOC)}
\newcommand{\FOCS}{IEEE Symposium on Foundations of Computer Science (FOCS)}
\newcommand{\SODA}{Annual ACM-SIAM Symposium on Discrete Algorithms (SODA)}
\newcommand{\SOCG}{Annual Symposium on Computational Geometry (SoCG)}
\newcommand{\ICALP}{Annual International Colloquium on Automata, Languages and Programming (ICALP)}
\newcommand{\ESA}{Annual European Symposium on Algorithms (ESA)}
\newcommand{\CCC}{Annual IEEE Conference on Computational Complexity (CCC)}
\newcommand{\RANDOM}{International Workshop on Randomization and Approximation Techniques in Computer Science (RANDOM)}
\newcommand{\PODS}{ACM SIGMOD Symposium on Principles of Database Systems (PODS)}
\newcommand{\PODC}{Annual ACM Symposium on Principles of Distributed Computing (PODC)}
\newcommand{\SSDBM}{ International Conference on Scientific and Statistical Database Management (SSDBM)}
\newcommand{\ALENEX}{Workshop on Algorithm Engineering and Experiments (ALENEX)}
\newcommand{\BEATCS}{Bulletin of the European Association for Theoretical Computer Science (BEATCS)}
\newcommand{\CCCG}{Canadian Conference on Computational Geometry (CCCG)}
\newcommand{\CIAC}{Italian Conference on Algorithms and Complexity (CIAC)}
\newcommand{\COCOON}{Annual International Computing Combinatorics Conference (COCOON)}
\newcommand{\COLT}{Annual Conference on Learning Theory (COLT)}
\newcommand{\COMPGEOM}{Annual ACM Symposium on Computational Geometry}
\newcommand{\CPC}{Combinatorics, Probability and Computing}
\newcommand{\DCGEOM}{Discrete \& Computational Geometry}
\newcommand{\DISC}{International Symposium on Distributed Computing (DISC)}
\newcommand{\ECCC}{Electronic Colloquium on Computational Complexity (ECCC)}
\newcommand{\FSTTCS}{Foundations of Software Technology and Theoretical Computer Science (FSTTCS)}
\newcommand{\ICCCN}{IEEE International Conference on Computer Communications and Networks (ICCCN)}
\newcommand{\ICDCS}{International Conference on Distributed Computing Systems (ICDCS)}
\newcommand{\VLDB}{ International Conference on Very Large Data Bases (VLDB)}
\newcommand{\IJCGA}{International Journal of Computational Geometry and Applications}
\newcommand{\INFOCOM}{IEEE INFOCOM}
\newcommand{\IPCO}{International Integer Programming and Combinatorial Optimization Conference (IPCO)}
\newcommand{\ISAAC}{International Symposium on Algorithms and Computation (ISAAC)}
\newcommand{\ISTCS}{Israel Symposium on Theory of Computing and Systems (ISTCS)}
\newcommand{\JACM}{Journal of the ACM}
\newcommand{\JALGORITHMS}{Journal of Algorithms}
\newcommand{\LNCS}{Lecture Notes in Computer Science}
\newcommand{\RSA}{Random Structures and Algorithms}
\newcommand{\SPAA}{Annual ACM Symposium on Parallel Algorithms and Architectures (SPAA)}
\newcommand{\STACS}{Annual Symposium on Theoretical Aspects of Computer Science (STACS)}
\newcommand{\SWAT}{Scandinavian Workshop on Algorithm Theory (SWAT)}
\newcommand{\TALG}{ACM Transactions on Algorithms}
\newcommand{\TCS}{Theoretical Computer Science}
\newcommand{\UAI}{Conference on Uncertainty in Artificial Intelligence (UAI)}
\newcommand{\WADS}{Workshop on Algorithms and Data Structures (WADS)}
\newcommand{\SICOMP}{SIAM Journal on Computing}
\newcommand{\JCSS}{Journal of Computer and System Sciences}
\newcommand{\JASIS}{Journal of the American society for information science}
\newcommand{\PMS}{ Philosophical Magazine Series}
\newcommand{\ML}{Machine Learning}
\newcommand{\DCG}{Discrete and Computational Geometry}
\newcommand{\TODS}{ACM Transactions on Database Systems (TODS)}
\newcommand{\PHREV}{Physical Review E}
\newcommand{\NATS}{National Academy of Sciences}
\newcommand{\MPHy}{Reviews of Modern Physics}
\newcommand{\NRG}{Nature Reviews : Genetics}
\newcommand{\BullAMS}{Bulletin (New Series) of the American Mathematical Society}
\newcommand{\AMSM}{The American Mathematical Monthly}
\newcommand{\JAM}{SIAM Journal on Applied Mathematics}
\newcommand{\JDM}{SIAM Journal of  Discrete Math}
\newcommand{\JASM}{Journal of the American Statistical Association}
\newcommand{\AMS}{Annals of Mathematical Statistics}
\newcommand{\JALG}{Journal of Algorithms}
\newcommand{\TIT}{IEEE Transactions on Information Theory}
\newcommand{\CM}{Contemporary Mathematics}
\newcommand{\JC}{Journal of Complexity}
\newcommand{\TSE}{IEEE Transactions on Software Engineering}
\newcommand{\TNDE}{IEEE Transactions on Knowledge and Data Engineering}
\newcommand{\JIC}{Journal Information and Computation}
\newcommand{\ToC}{Theory of Computing}
\newcommand{\MST}{Mathematical Systems Theory}
\newcommand{\Com}{Combinatorica}
\newcommand{\NC}{Neural Computation}
\newcommand{\TAP}{The Annals of Probability}
\newcommand{\CRYPTO}{Annual International Cryptology Conference (CRYPTO)}
\renewcommand{\CRYPTO}{CRYPTO}
\newcommand{\EUROCRYPT}{Annual International Conference on the Theory and Applications of Cryptographic Techniques}
\renewcommand{\EUROCRYPT}{EUROCRYPT}

\bibliographystyle{plain}
\bibliography{LE}

\end{document}